\newtheorem{thm}{Theorem}[section]
\theoremstyle{definition}
\newcommand{\bZ}{\mathbb{Z}}
\newcommand{\mv}{\vspace{-0.03cm}}
\titleformat*{\section}{\Large\bfseries}
\titleformat*{\subsection}{\large\bfseries}
\begin{document}

\title{Dynamics of one-dimensional spin models under the line-graph operator}

\author{Marco A. Javarone}
\email{marcojavarone@gmail.com}

\author{Josh A. O'Connor}
\email{josh.oconnor128@gmail.com}

\date{\today}

\begin{abstract}
\noindent We investigate the application of the line-graph operator to one-dimensional spin models with periodic boundary conditions.
The spins (or interactions) in the original spin structure become the interactions (or spins) in the resulting spin structure.
We identify conditions which ensure that each new spin structure is stable, that is, its spin configuration minimises its internal energy.
Then, making a correspondence between spin configurations and binary sequences, we propose a model of information growth and evolution based on the line-graph operator.
Since this operator can generate frustrations in newly formed spin chains, in the proposed model such frustrations are immediately removed.
Also, in some cases, the previously frustrated chains are allowed to recombine into new stable chains.
As a result, we obtain a population of spin chains whose dynamics is studied using Monte Carlo simulations.
Lastly, we discuss potential applications to areas of research such as combinatorics and theoretical biology.
\end{abstract}

%
\maketitle

\section{Introduction}\label{sec:Introduction}
\noindent The line-graph operator~\cite{reddy01} $\lambda$ takes a graph $G(V,E)$ with $V$ vertices and $E$ edges and sends it to a new graph $G'(V',E')$ with $V'$ vertices and $E'$ edges.
For every edge in $G$ there exists a vertex in $G'$ and if two edges in $G$ meet at a vertex then their corresponding vertices in $G'$ are joined by an edge.
Note that the graphs $G$ and $G'$ are isomorphic if $G$ is a cyclic graph.
In this paper we investigate the application of the operator $\lambda$ to one-dimensional spin models with periodic boundary conditions.
Spin models are a core topic in statistical physics and they have been applied to a wide variety of fields such as neural networks, information theory, computational neuroscience and evolutionary game theory among many other areas~\cite{galam01,galam02,loreto01,agliari01,szolnoki02,perc01,perc02,javarone01,javarone02,javarone03}.
A simple spin model with $N$ spins can be described by the following Hamiltonian 
\begin{equation}\label{eq:hamiltonian}
    H = -\frac{1}{N} \sum_{\langle{i,j}\rangle}^{N} J_{ij} \sigma_i \sigma_j 
\end{equation}
\noindent where the spins $\sigma_i$ are each equal to $\pm1$.
Note that $\langle{i,j}\rangle$ denotes a sum over adjacent spins and $J_{ij}$ describes the interaction between any pair of adjacent spins $\sigma_i$ and $\sigma_j$.
Positive interactions $J_{ij}>0$ are defined ferromagnetic, whereas negative interactions $J_{ij}<0$ are defined antiferromagnetic.
Spin models which include both types of interaction are called spin glasses~\cite{sherrington01,parisi01,contucci02,barra01} and they have found remarkable applications in fields such as neural networks and deep learning, among many others~\cite{bengio01,barra02,barra03}.\vspace{0.5cm}

\noindent These models reach equilibrium once their free-energy $F=U-TS$ is minimised, where $U$, $T$ and $S$ are the internal energy, temperature and entropy of the system, respectively.
The temperature plays a key role in deciding if a state of equilibrium can be reached by maximising $S$ or minimising $U$~\cite{mussardo01}.
At high temperatures the entropy is the leading term, whereas the internal energy is most relevant at low temperatures.
From now on, we consider the regime of low temperatures where the internal energy is described by the Hamiltonian in~(\ref{eq:hamiltonian}).
Different spin models can have different energy landscapes.
For example, there are two equilibrium configurations at low temperatures in the ferromagnetic Ising model corresponding to the two states with every spin aligned in the same direction.
On the other hand, spin glasses have a much richer energy landscape with many more minimum energy states.
For instance, such richness can be useful for storing patterns of information~\cite{agliari02}.
Interestingly, in a spin glass there may be frustrations which make it impossible to stabilise at a local minimum of internal energy.\vspace{0.5cm}

\noindent By making a correspondence between positive spins and the symbol $0$, and between negative spins and the symbol $1$, every spin configuration may be mapped to a binary sequence.
Note that this is just a convention and we could have mapped the symbols the other way around.
With this mapping, we are able to assign information content to a spin structure.
We also introduce a correspondence between spin configurations and cyclic graphs which have a vertex 2-colouring.
Therefore, a configuration of $N$ spins corresponds to a periodic binary sequence with period $N$, or to a cyclic graph with $N$ vertices and a vertex 2-colouring.
By modifying the line-graph operator to produce a vertex 2-colouring of $G'$ from a vertex 2-colouring of $G$, we can use these correspondences to observe how periodic binary sequences and spin chains with periodic boundary conditions evolve under the repeated action of $\lambda$.
It turns out that periodic binary sequences whose length is a power of $2$ have particularly interesting dynamics under $\lambda$.
Line-graph dynamics and some of its properties are discussed in Section~\ref{sec:linegraph}.
Then, in Section~\ref{sec:char-graphs}, we encode this dynamics for each length $N$ into the characteristic graphs $\Gamma_n$.
As an application, we propose a model of information growth and evolution based on $\lambda$ and simple mechanisms for handling frustrations in Section~\ref{sec:model}.
In particular, the proposed model studies the population dynamics of spin chains that evolve under $\lambda$ and we observe connections with the characteristic graphs from Section~\ref{sec:char-graphs}.
Remarkably, line-graph evolution becomes an ergodic process when the process begins with an ordered spin chain whose length is a power of 2.
All the results are presented in Section~\ref{sec:results}.
Finally, our observations and future possible directions are discussed in Section~\ref{sec:conclusions}. 

\section{The line-graph operator}\label{sec:linegraph}
\noindent We now describe the action of the line-graph operator~\cite{harary01,beineke01} on one-dimensional spin chains with periodic boundary conditions (other applications of this operator in the context of statistical physics can be found in~\cite{caravelli01,chiu01,caravelli02}).
Since all of these chains have the topology of cyclic graphs, the action of $\lambda$ must preserve the number of spins and interactions.
As a result, two relevant possibilities for the action of the line-graph operator are:
\begin{enumerate}[(i)]
    \item Transforming interactions into spins
    \item Transforming spins into interactions
\end{enumerate}
In the first case, one new spin chain is produced.
In the second case, two new spin chains are produced and they are related by spin permutation, that is, swapping the positive spins for negative spins and vice versa.
The spin chains produced by (i) and (ii) will, in general, not be the same.
These two transformations can be thought of as inverses of each other and, without loss of generality, we label (i) and (ii) by $\lambda$ and $\kappa$, respectively.
At this point, we invoke the correspondence from Section~\ref{sec:Introduction} between spins and the symbols $0$ and $1$ which leads to a correspondence between spin chains and binary sequences.
A periodic binary sequence $a=(\ldots,a_{-1},a_0,a_1,\ldots)$ with period $n$ satisfies $a_i\in\{0,1\}$ and $a_{i}=a_{i+n}$ for all $i\in\bZ$, and we label it by the repeating binary string $a_1\cdots{}a_n$.
We also consider a correspondence between periodic binary sequences and cyclic graphs with vertex 2-colourings.
For the above sequence, the cyclic graph $G$ corresponding to it has $n$ vertices $v_1,\ldots,v_n$ labelled clockwise where the colour of each vertex $v_i$ is $a_i$ for all $i$.\vspace{0.5cm}
Here, rotational symmetry is imposed to ensure that the first vertex label $v_1$ is arbitrary.
In doing so, the symmetry group of this system becomes the cyclic group $C_n$ generated by the rotation $\rho:a_i\mapsto{}a'_i=a_{i+1}$.
Equivalently, two sequences $a$ and $a'$ are rotationally equivalent if they are related by $a'_i=a_{i+s}$ for some integer $s$.
For example, $\rho(0001)=0010$.
Equivalence classes of periodic binary sequences under $C_n$ are called binary necklaces~\cite{mes01}.
In addition, we denote by $O$ the trivial sequence which has all of its digits equal to zero.
\vspace{0.5cm}

\noindent We are now ready to describe in detail how the line-graph can be applied to 2-colouring graphs.
Due to the previous correspondences, all results obtained here can also be applied to spin chains and binary sequences.
The line-graph operator $\lambda$ takes a 2-colouring of $G$ and produces a 2-coloring of $G'$ which is defined in the following way.
Let $v_1,\ldots,v_n$ denote the vertices of $G$ whose colours are $a_1,\ldots,a_n$, respectively, and label the edge which joins $v_i$ and $v_j$ by $e_{ij}$.
The colour $a'_{ij}$ of the vertex $v'_{ij}$ in $G'$ corresponding to the edge $e_{ij}$ in $G$ is defined as
\begin{equation}\label{eq:colour}
    a'_{ij}\equiv{}a_i+a_j\pmod2
\end{equation}
As a result, $a'_{ij}=0$ when $a_i=a_j$ and $a'_{ij}=1$ when $a_i\neq{}a_j$.
Now let $G$ be a cyclic graph with $n$ vertices.
We can label the edge joining $v_i$ and $v_{i+1}$ by $e_i$ for $i=1,\ldots,n-1$, and the edge joining $v_n$ and $v_1$ by $e_n$.
Denote by $v'_i$ the vertex in $G'$ corresponding to the edge $e_i$ in $G$.
The colouring $a'_i$ of the vertex $v'_i$ in $G'$ is given by
\begin{equation}
    a'_i\equiv\begin{cases}a_i+a_{i+1}\pmod2&\hspace{0.5cm}\text{if}\:\:i=1,\ldots,n-1\\a_n+a_1\pmod2&\hspace{0.5cm}\text{if}\:\:i=n\end{cases}
\end{equation}
For example, $\lambda^2(010110010)=\lambda(111010110)=001111011$.
The following result allows us to study the behaviour of periodic spin chains transforming under $\lambda$.
\begin{thm}\label{thm:line-graph}
    Let $\Sigma$ be a periodic binary sequence of length $n$. Then $\lambda^n(\Sigma)=0$ for all $\Sigma$ if and only if $n$ is a power of 2.
\end{thm}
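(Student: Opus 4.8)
The plan is to linearise the operator over the field with two elements $\bZ/2\bZ$. The map $\lambda$ sending $(a_1,\dots,a_n)$ to $(a_1+a_2,\dots,a_{n-1}+a_n,a_n+a_1)$ is $\bZ/2\bZ$-linear, and if $S$ denotes the cyclic shift $a_i\mapsto a_{i+1}$ (so that $S^n=\mathrm{id}$ by periodicity) then $\lambda=\mathrm{id}+S$. Since $\mathrm{id}$ and $S$ commute, I would expand $\lambda^n=(\mathrm{id}+S)^n$ by the binomial theorem modulo $2$, obtaining componentwise
\begin{equation}
\lambda^n(\Sigma)_i\equiv\sum_{j=0}^{n}\binom{n}{j}\,a_{i+j}\pmod{2},
\end{equation}
with all indices read modulo $n$. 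This converts the whole question into a statement about the parities of the binomial coefficients $\binom{n}{j}$.

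The key input is the classical fact (Lucas' theorem, or equivalently Kummer's theorem on carries in base $2$) that $\binom{n}{j}$ is even for every $0<j<n$ if and only if $n$ is a power of $2$; equivalently, $(1+x)^n=1+x^n$ in $(\bZ/2\bZ)[x]$ exactly when $n=2^k$. I would state and briefly justify this, as it drives both directions. For the ``if'' direction, suppose $n=2^k$. Then every interior coefficient $\binom{n}{j}$ with $0<j<n$ vanishes mod $2$, so the sum collapses to the two endpoint terms $\binom{n}{0}a_i+\binom{n}{n}a_{i+n}=a_i+a_{i+n}$. Periodicity gives $a_{i+n}=a_i$, hence $\lambda^n(\Sigma)_i\equiv a_i+a_i\equiv 0$ for every $i$ and every $\Sigma$, that is, $\lambda^n(\Sigma)=O$.

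For the ``only if'' direction I would argue by contrapositive and exhibit a witness: take $\Sigma$ to be the sequence with a single $1$ (say $a_0=1$ and $a_i=0$ otherwise within one period). The formula then selects exactly one binomial coefficient per slot, and a short index computation shows $\lambda^n(\Sigma)=(0,\binom{n}{1},\binom{n}{2},\dots,\binom{n}{n-1})$ modulo $2$, the boundary contributions $\binom{n}{0}$ and $\binom{n}{n}$ cancelling in the zeroth slot. If $n$ is not a power of $2$, then by the parity fact above at least one interior coefficient $\binom{n}{j}$ is odd, so this image is nonzero and $\lambda^n(\Sigma)\neq O$, completing the equivalence.

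The step I expect to be the genuine content is the parity characterisation itself; everything else is linear-algebra bookkeeping, though care is needed so that the two wrap-around terms $\binom{n}{0},\binom{n}{n}$ are correctly identified and so that no interior slot of the witness receives a second, cancelling contribution. An equivalent and perhaps cleaner route that sidesteps the index handling is to pass to the quotient ring $(\bZ/2\bZ)[x]/(x^n-1)$, identify $\lambda$ with multiplication by $1+x$, and observe that $\lambda^n=O$ is equivalent to $(x^n-1)\mid(1+x)^n$. Writing $n=2^k m$ with $m$ odd and using the Frobenius identity $x^n-1=(x^m-1)^{2^k}$ over $\bZ/2\bZ$, the factor $x^m-1$ acquires irreducible factors beyond $(x-1)$ precisely when $m>1$; these obstruct the divisibility, since $(1+x)^n$ is a pure power of $(x+1)$, and the divisibility holds (indeed with equality) exactly when $m=1$.
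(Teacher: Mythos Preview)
Your proof is correct and follows essentially the same route as the paper: both write $\lambda$ as $I+S$ over $\mathbb{F}_2$ with $S$ the cyclic shift, expand $(I+S)^n$ binomially, and invoke Lucas' theorem on the parities of $\binom{n}{j}$. Your version is in fact more complete, since you supply an explicit witness for the ``only if'' direction (and an alternative via $(\bZ/2\bZ)[x]/(x^n-1)$), whereas the paper leaves implicit that the powers $R_n^i$ are linearly independent so that the matrix sum vanishes only when every coefficient does.
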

\begin{proof}
    Suppose that $\Sigma=a_1a_2\ldots{}a_n$. The action of $\lambda$ on the periodic sequence $\Sigma$ may be represented using $n$-dimensional vectors over the finite field $\mathbb{F}_2$ as follows:
    \begin{equation}
        \Lambda_n\Sigma=\left(\hspace{0.1cm}\begin{matrix}1\mv&1\mv&0\mv&\cdots\mv&0\mv&0\mv\\0\mv&1\mv&1\mv&\cdots\mv&0\mv&0\mv\\0\mv&0\mv&1\mv&\cdots\mv&0\mv&0\mv\\\vdots\mv&\vdots\mv&\vdots\mv&\ddots\mv&\vdots\mv&\vdots\mv\\0\mv&0\mv&0\mv&\cdots\mv&1\mv&1\mv\\1&0&0&\cdots&0&1\end{matrix}\hspace{0.1cm}\right)\left(\hspace{0.1cm}\begin{matrix}a_1\mv&\null\mv&\null\mv&\null\mv&\null\mv&\null\mv\\a_2\mv&\null\mv&\null\mv&\null\mv&\null\mv&\null\mv\\a_3\mv&\null\mv&\null\mv&\null\mv&\null\mv&\null\mv\\\vdots\mv&\null\mv&\null\mv&\null\mv&\null\mv&\null\mv\\a_{n-1}\mv&\null\mv&\null\mv&\null\mv&\null\mv&\null\mv\\a_n&\null&\null&\null&\null&\null\end{matrix}\hspace{-1.0cm}\right)
    \end{equation}
    where $\Lambda_n$ describes the action of $\lambda$ on binary sequences with period $n$.  
    Now consider
        \begin{equation}
        R_n=\left(\hspace{0.1cm}\begin{matrix}0\mv&1\mv&0\mv&\cdots\mv&0\mv&0\mv\\0\mv&0\mv&1\mv&\cdots\mv&0\mv&0\mv\\0\mv&0\mv&0\mv&\cdots\mv&0\mv&0\mv\\\vdots\mv&\vdots\mv&\vdots\mv&\ddots\mv&\vdots\mv\\0\mv&0\mv&0\mv&\cdots\mv&0\mv&1\mv\\1&0&0&\cdots&0&0\end{matrix}\hspace{0.1cm}\right)
    \end{equation}
    which is the $n \times n$ rotation matrix representing $\rho$. 
    This implies that $\Lambda_n=I_n+R_n$ where $I_n$ is the $n \times n$ identity matrix and we use ${R_n}^n=I_n$ to obtain
    \begin{equation}
        {\Lambda_n}^n=(I_n+R_n)^n=2I_n+\sum_{i=1}^{n-1}\binom{n}{i}{R_n}^i
    \end{equation}
    The first term is even and the remaining terms are also even if and only if $n$ is a power of 2 as a consequence of Lucas' theorem~\cite{lucas}.
\end{proof}
\noindent Many sequences of period $n$ will terminate at $O$ after fewer than $n$ iterations of $\lambda$. For example, the sequence $1001$ is 4 digits long but it will terminate after just 3 iterations since $\lambda^3(1001)=\lambda^2(1010)=\lambda(1111)=0000$.
To make contact with the notation for spin chains, we can replace $0$ and $1$ with $+1$ and $-1$, respectively, which allows us to write the spin chain $\Sigma=1001$ above as $\Sigma=(-1,+1,+1,-1)$.
Applying $\lambda$ to this three times, we find $\lambda^3(-1,+1,+1,-1)=(+1,+1,+1,+1)$, as expected.\vspace{0.5cm}

\noindent Equation~(\ref{eq:colour}) tells us how to colour the vertices of $G'$.  It is also a manifestation of
\begin{equation}\label{eq:colour_spin}
    J'_{ij} = \sigma_i \cdot \sigma_j
\end{equation}
\noindent where $J'_{ij}$ are the interactions between spins $\sigma'_i$ and $\sigma'_j$ of the new spin structure.
Note that the final stable configuration is the trivial sequence $O=\ldots000\ldots$ using equation~(\ref{eq:colour}), but this is equivalent to $O=(\ldots,+1,+1,+1,\ldots)$ using equation~(\ref{eq:colour_spin}).

\section{Characteristic graphs and equivalence classes}\label{sec:char-graphs}
\noindent In this section, we encode the dynamics of periodic binary sequences of period $n$ under $\lambda$ into characteristic graphs which we label $\Gamma_n$.
For instance, when $n=4$, the equivalence classes under the symmetry group $C_n$ are $\{0000\}$, $\{1111\}$, $\{0101,1010\}$, $\{0011,0110,1100,1001\}$, $\{0001,0010,0100,1000\}$ and $\{0111,1110,1101,1011\}$.
Then, we use representatives from each class to label them: $[0000]$, $[0101]$, and so on.
Moreover, in general, if any two sequences $a_1\cdots{}a_n$ and $b_1\cdots{}b_n$ belong to the same equivalence class, their images under $\lambda$ also belong to the same equivalence class.
For example, $0100$ and $0010$ both belong to $[0001]$ and we see that $\lambda(0100)=1100$ and $\lambda(0010)=0110$ both belong to $[0011]$.
The line-graph dynamics for $n=4$ is shown in Figure~\ref{fig:figure_1}.

\begin{figure}[H]
	\begin{center}
		\includegraphics[width=0.6\textwidth]{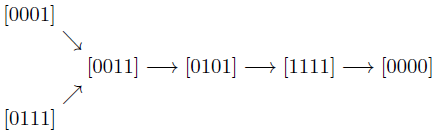}
		\caption{Line-graph dynamics for $n=4$.\label{fig:figure_1}}\vspace{-0.7cm}
	\end{center}
\end{figure}

\noindent In this case, the graph is cycle-free and, as a result, every periodic binary sequence of period $4$ terminates at some `eigensequence' $x$ of $\lambda$ which is defined as a sequence such that $\lambda([x])=[x]$.
Furthermore, this graph is connected so there is exactly one eigensequence for $n=4$.
Other characteristic graphs, for $\lambda$ are shown in Figure~\ref{fig:figure_2}, where we observe the emergence of disconnected components.

\begin{figure}[H]
	\begin{center}
		\includegraphics[width=0.85\textwidth]{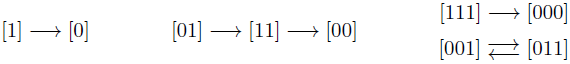}\vspace{-0.3cm}
		\caption{Line-graph dynamics for $n=1$, $n=2$ and $n=3$.\label{fig:figure_2}}\vspace{-0.7cm}
	\end{center}
\end{figure}

\noindent For the sake of simplicity, these digraphs can be reduced to the characteristic graphs $\Gamma_n$ where each vertex is an equivalence class.  The first eight are shown in Figure~\ref{fig:figure_3}.

\begin{figure}[H]
	\begin{center}
		\includegraphics[width=0.9\textwidth]{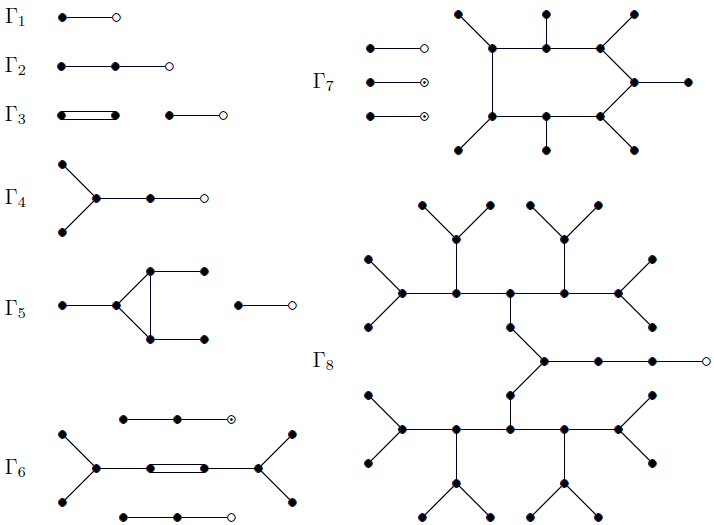}\vspace{-0.2cm}
		\caption{Characteristic graphs for $n=1,\ldots,8$ with $C_n$ symmetry.\label{fig:figure_3}}\vspace{-0.7cm}
	\end{center}
\end{figure}

\noindent In the previous figure, white vertices represent the trivial sequence and white vertices with a dot in the middle represent non-trivial eigensequences of $\lambda$ such as $011011$.\vspace{0.5cm}

\noindent It is now useful to introduce spin permutation symmetry which swaps the symbols 0 and 1 in a binary sequence.
Equivalently, it swaps positive and negative spins in a spin chain.
This symmetry is characterised by the symmetric group $S_2$ generated by the operator $\pi$.
For example, $\pi(0111001)=1000110$.
If we group sequences into equivalence classes under both rotational and spin permutation symmetry, then the symmetry group is $C_n\times{}S_2$ and we find a different collection of characteristic graphs which are shown in Figure~\ref{fig:figure_4}.

\begin{figure}[H]
    \begin{center}
      \includegraphics[width=4.75in]{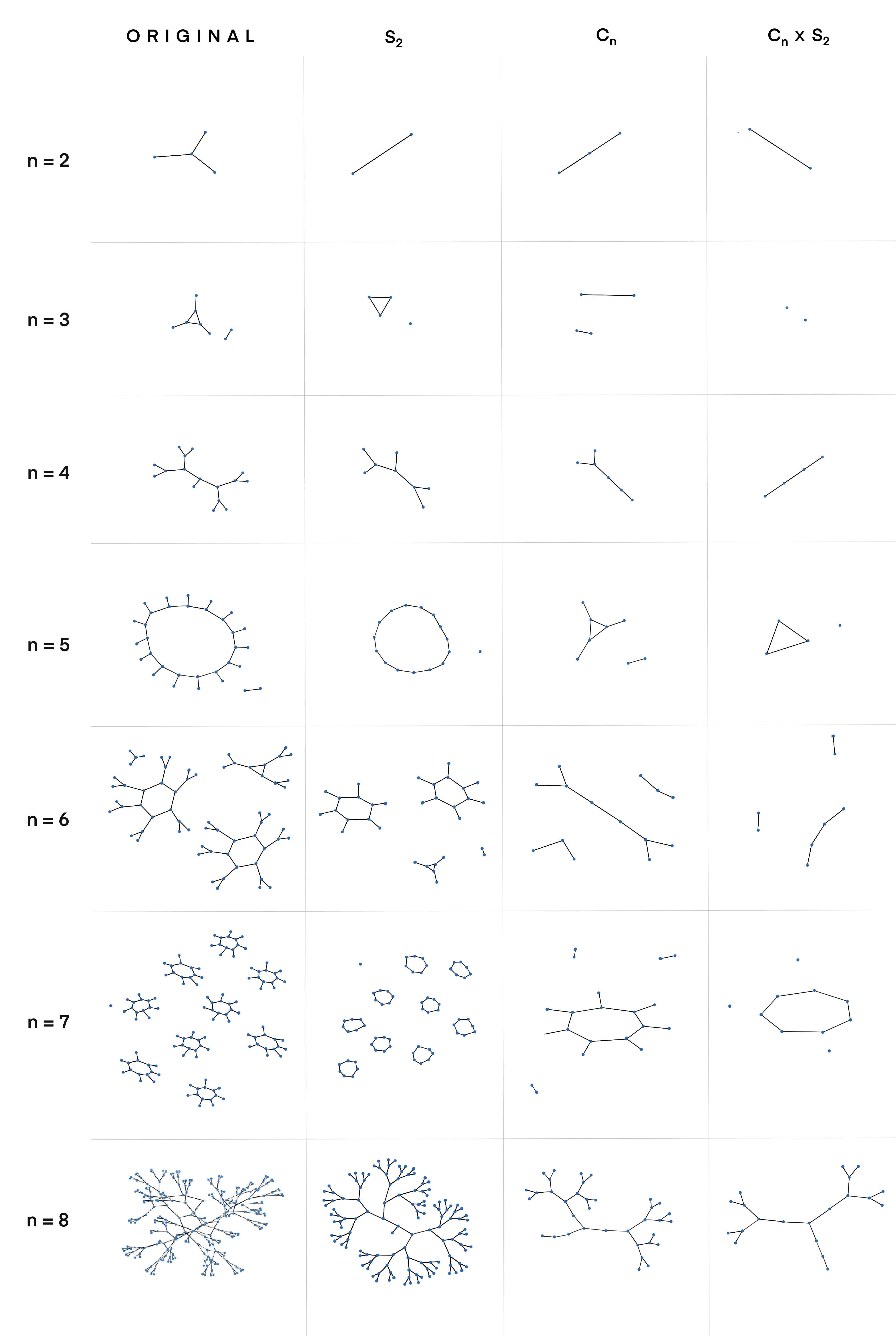}
        \caption{\small From left to right: line-graph dynamics with no symmetry (original), $S_2$ symmetry, $C_n$ symmetry and $C_n\times{}S_2$ symmetry. From top to bottom: sequences with $n=2,\ldots,8$.\label{fig:figure_4}}
    \end{center}
\end{figure}
\noindent A more detailed discussion on this topic can be found in~\cite{gilbert-riordan}.
To conclude the section, we remark that other symmetry groups might be identified.
For example, the dihedral group $D_n$ generated by $\rho$ and the reflection operator $\tau:a_i\mapsto{}a_{n-i}\equiv{}a_{-i}$ whose action is $\tau(a_1\cdots{}a_n)=a_n\cdots{}a_1$.

\section{Population dynamics of spin chains}\label{sec:model}
\noindent In this section we study the population dynamics~\cite{szolnoki01} of spin chains under repeated application of the operator $\kappa$.
Recall that $\kappa$ generates two new spin chains so if we apply $\kappa$ to the spin chain $\Sigma=(+1,-1,-1,+1)$, then we obtain two new spin chains given by $\Sigma_1=(+1,+1,-1,+1)$ and $\Sigma_2=(-1,-1,+1,-1)$.
To verify this, note that $\lambda(\Sigma_1)=\lambda(\Sigma_2)=\Sigma$.
A particularly useful quantity for studying the spin chains that compose our population is the average magnetisation
\begin{equation}
    \langle M \rangle=\frac{1}{N}\sum_{i=1}^{N}\sigma_i
\end{equation}
\noindent which allows us to assess whether or not a configuration is ordered.
In particular, ordered states have $\langle{}M\rangle=\pm1$ and disordered states have $\langle M \rangle \sim 0$.
The sequence of spin chains obtained by repeatedly applying $\kappa$ can be studied by computing the magnetisation at each iteration as shown in Figure~\ref{fig:figure_5}.

\begin{figure}[H]
	\centering
		\includegraphics[width=1.0\textwidth]{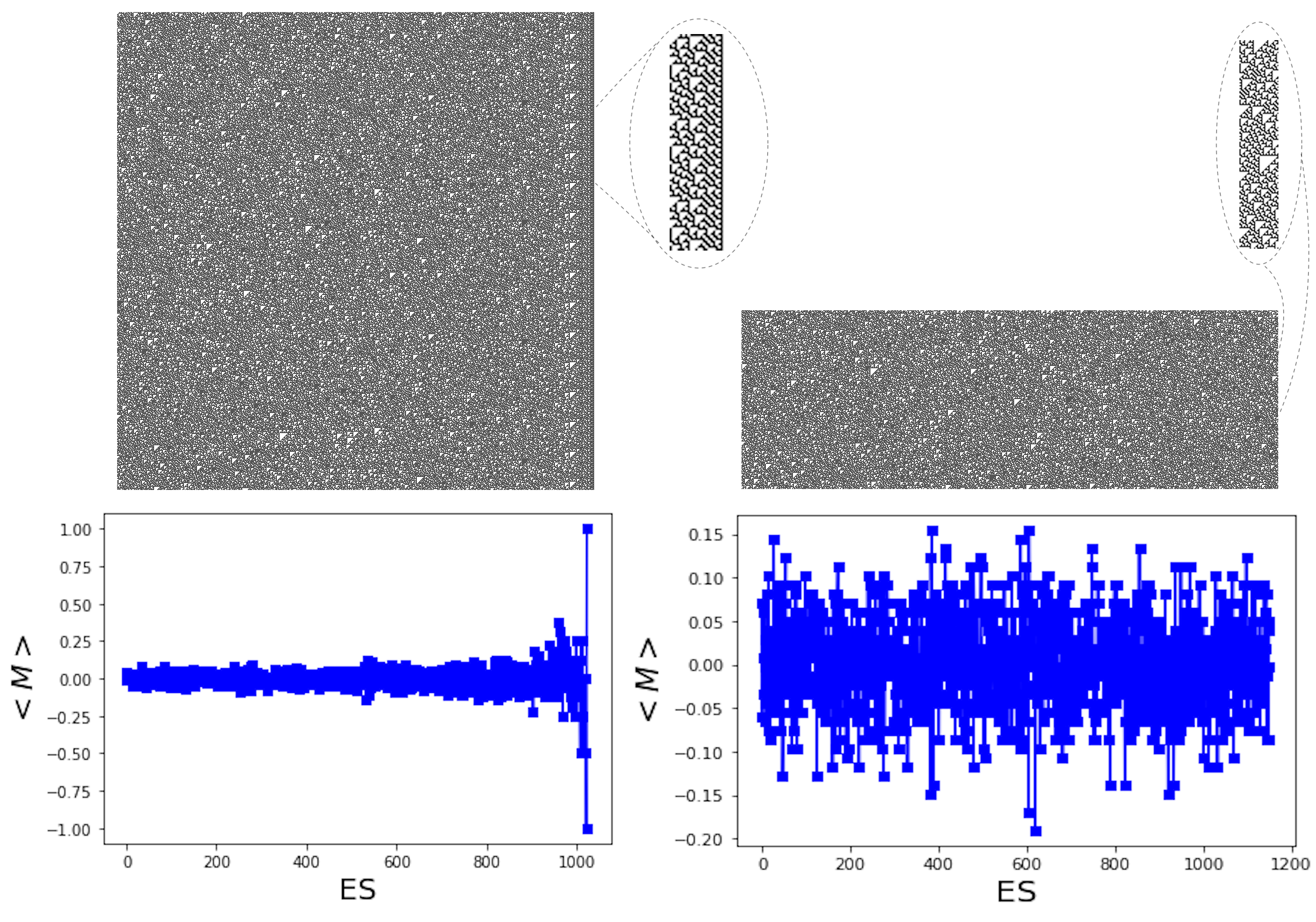}\vspace{-0.2cm}
		\caption{On the left: a binary image (white pixels represent $0$ and black pixels represent $1$) describing the line-graph dynamics of a randomly chosen spin chain with length $N=2^{10}$, and the magnetisation at each evolutionary step is shown below.
		On the right: a similar image for a randomly chosen spin chain with length $N=383$ along with its associated magnetisation.
		The process for $N=383$ was terminated after $3N$ iterations.
		Each inset is an enlargement of the right edge of the related binary image and we observe that the spin chain approaches an ordered state for $N=2^{10}$ but not for $N=383$. \label{fig:figure_5}}
	\vspace{-0.4cm}
\end{figure}

\noindent In this figure, we see the evolution of spin chains of length $N=2^{10}$ and $N=383$.
The length of the first chain is a power of 2 and it terminates after at most $N$ iterations, as expected from Theorem~\ref{thm:line-graph}.
However, for the same reason, the second chain does not terminate into a final stable configuration, so we stop the process after $3N$ iterations.
The magnetisation fluctuates close to zero during the entire simulation for the second chain, but it sharply converges towards $+1$ near the end for the first chain.
This interesting behaviour, expected from Theorem~\ref{thm:line-graph}, can be observed in the two black-and-white images in Figure~\ref{fig:figure_5}.
For instance, notice that the right-most vector in the first image is completely white, but this convergence does not appear in the second image.
It is also worth highlighting that this sort of order-disorder phase transition has an underlying dynamics which is very different from those occurring in other classical spin models (e.g.~\cite{javarone01,javarone02}).
Usually, either spins or interactions are kept \textit{quenched}, while in this model the line-graph operator varies spins and interactions with each iteration.\vspace{0.5cm}

\noindent Now we can proceed to describe the population dynamics of spin chains under repeated action of $\kappa$.
This model may represent a system of information growth and evolution.
In particular, information is encoded into spin chains which evolve using $\kappa$.
The mechanism to form a population is simple: we begin with an arbitrarily chosen spin chain (for simplicity, the initial spin chain could have only positive spins or only negative spins).
Then we apply $\kappa$ to each new spin chain and we only allow new configurations into the population if they are not duplicates.
Moreover, only stable configurations (i.e. those without frustrations) are allowed to generate an offspring.
If we start with a spin chain of length $N$, then the upper bound for the size of the population is $N_{\mathrm{max}}=2^N$
However, from Sections~\ref{sec:linegraph} and~\ref{sec:char-graphs} we know that the only sequences which can reach this upper bound are those whose length is a power of 2.
Once all accessible configurations are generated by $\kappa$ for a given $N$, frustrated configurations will appear in the system.
However, as previously mentioned, the line-graph operator is not applied to them.
To deal with this limitation we introduce the following mechanism: frustrated configurations are broken by removing the interaction responsible for the frustration.
As a result, the topology of each frustrated spin chain becomes linear.
They are now unable to generate offspring since this would not preserve the number of spins and the number of interactions under $\kappa$.
In some cases, these broken chains undergo a process where pairs of them recombine into new stable chains which continue to evolve.
A pair of broken chains will always recombine when the sum of their lengths is a power of 2.
On the other hand, every other pair of broken chains will recombine with a probability denoted $p_r$.
We have summarised the evolution and recombination processes in Figure~\ref{fig:figure_6}.
\begin{figure}[H]
	\centering
		\includegraphics[width=0.7\textwidth]{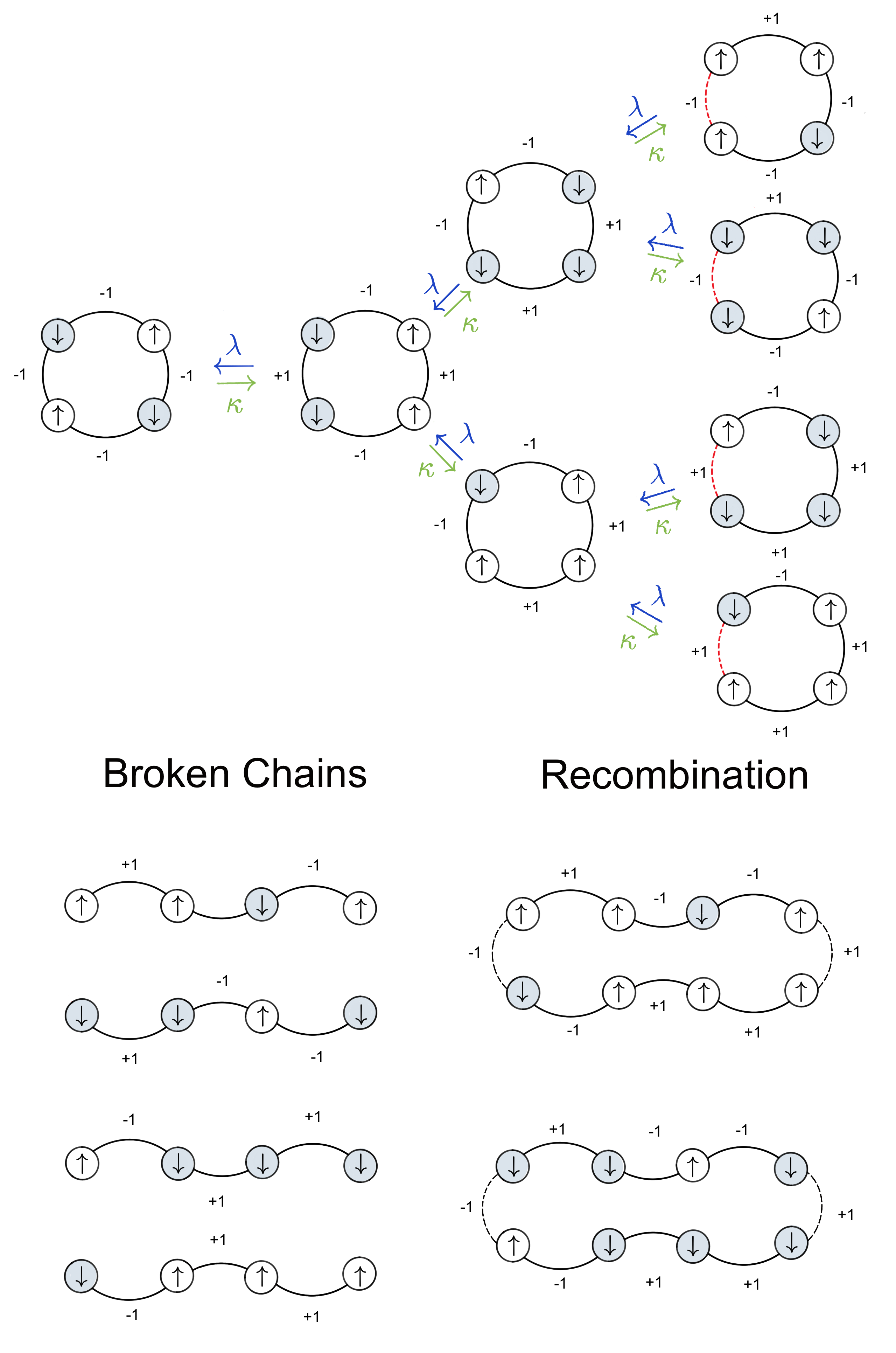}\vspace{-0.2cm}
		\caption{This is a pictorial description of line-graph evolution and recombination.
		The red dotted line represents the interaction to be removed and the black dotted line represents the interaction which is added during recombination.
		Positive spins and negative spins are contained inside white and black vertices, respectively.
		The operators $\lambda$ and $\kappa$ are shown in blue and green, respectively. \label{fig:figure_6}}
\end{figure}

\section{Results}\label{sec:results}
\noindent We implement Monte Carlo simulations to study the proposed model.
Each simulation begins with a single periodic spin chain whose length $N$ ranges from $1$ to $8$ and this is allowed to run for $21$ evolution steps (or time steps).
We compute three relevant quantities while averaging over $1000$ attempts: the number of configurations, the number of frustrated configurations, and the average length of spin chains.
Note that the need to implement Monte Carlo simulations is a result of the stochastic behaviour induced by the recombination mechanism associated to a probability $p_r$.\vspace{0.5cm}
Figure~\ref{fig:figure_7} shows how the number of configurations changes during this evolution for different values of $N$ while $p_r=0$ is fixed.
This value of $p_r$ allows recombination to occur only when the length of the chains is a power of 2.
Recall that this process is ergodic when $N$ is a power of 2 and, as a result, spin chains with length $N\in\{1,2,4,8\}$ are able to generate all possible configurations of that size before the recombination phase which allows them to grow.

\begin{figure}[H]
    \centering
    \includegraphics[width=5.5in]{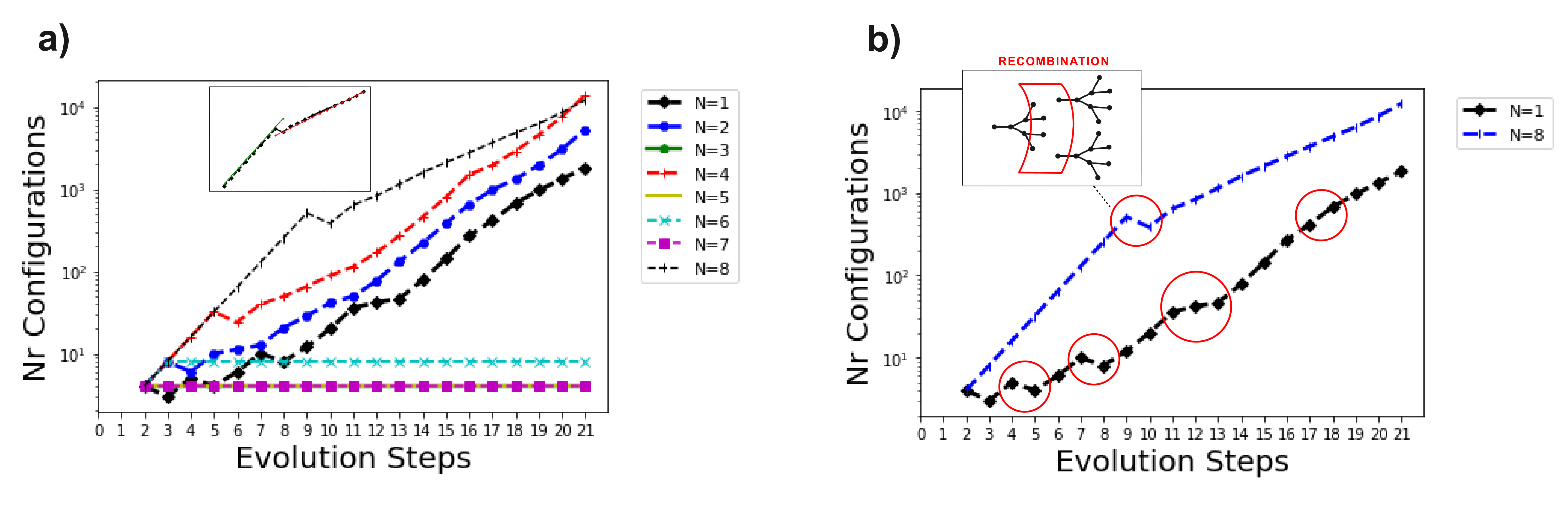}
    \caption{Number of configurations from an initial spin chain with length $N\in\{1,\ldots,8\}$.
    The inset in plot \textbf{a)} shows two curves fitting the curve for $N=8$.
    The red circles in plot \textbf{b)} highlight where the recombination phases occur for $N=1$ and $N=8$.}
    \label{fig:figure_7}
\end{figure}

\noindent In Figure~\ref{fig:figure_7}, the inset in plot \textbf{(a)} highlights the effect of recombination for $N=8$ by showing two functions which fit the curve before and after the recombination phase.
After recombination, the growth in the number of configurations slows down up to approximately $50\%$ and this effect is weaker for smaller values of $N$.
Plot \textbf{(b)} in the same figure emphasises the recombination phases for $N=1$ and $N=8$.
Unless otherwise specified, $N$ refers to the length of the initial spin chain in each simulation.
We now consider the cases where $N\in\{3,5,6,7\}$ by relaxing the constraint on $p_r$.
The result obtained for $p_r\in\{0.0,0.1,\ldots,1.0\}$ are shown in Figure~\ref{fig:figure_8}.

\begin{figure}[H]
    \centering
    \includegraphics[width=5.5in]{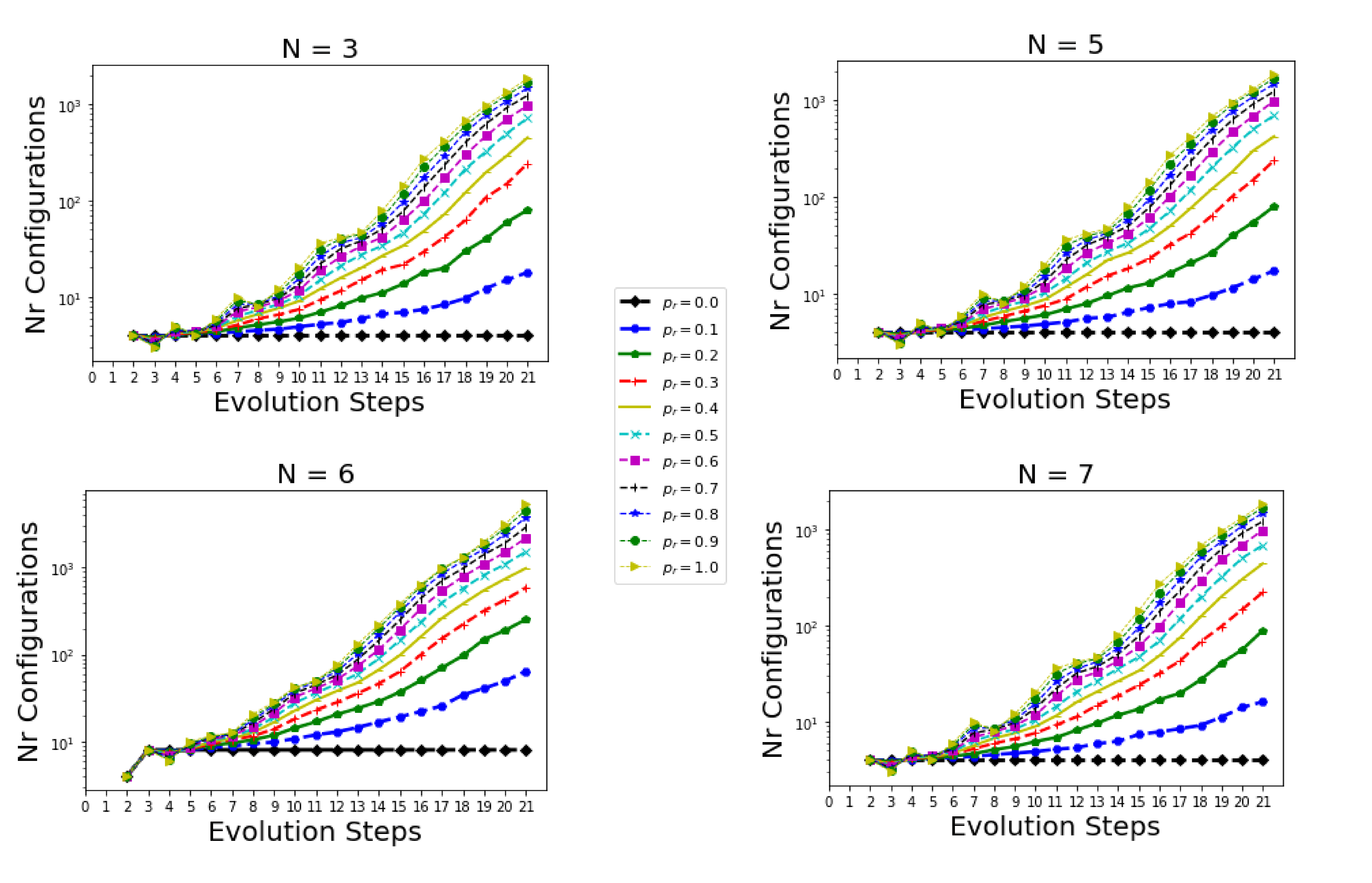}
    \caption{\small Number of configurations emerging over $21$ evolution steps from initial populations composed a single chain of length $N\in\{3,5,6,7\}$.
    As indicated in the legend, there is a curve for each value of $p_r$.\label{fig:figure_8}}
\end{figure}

\noindent In this figure, we observe that a higher value of $p_r$ leads to a larger population, as expected.
In light of this, in Figure~\ref{fig:figure_9}, we compare at different evolution steps the number of configurations obtained for different $N$ while varying $p_r$ from $0$ to $1$.

\begin{figure}[H]
    \centering
    \includegraphics[width=5.5in]{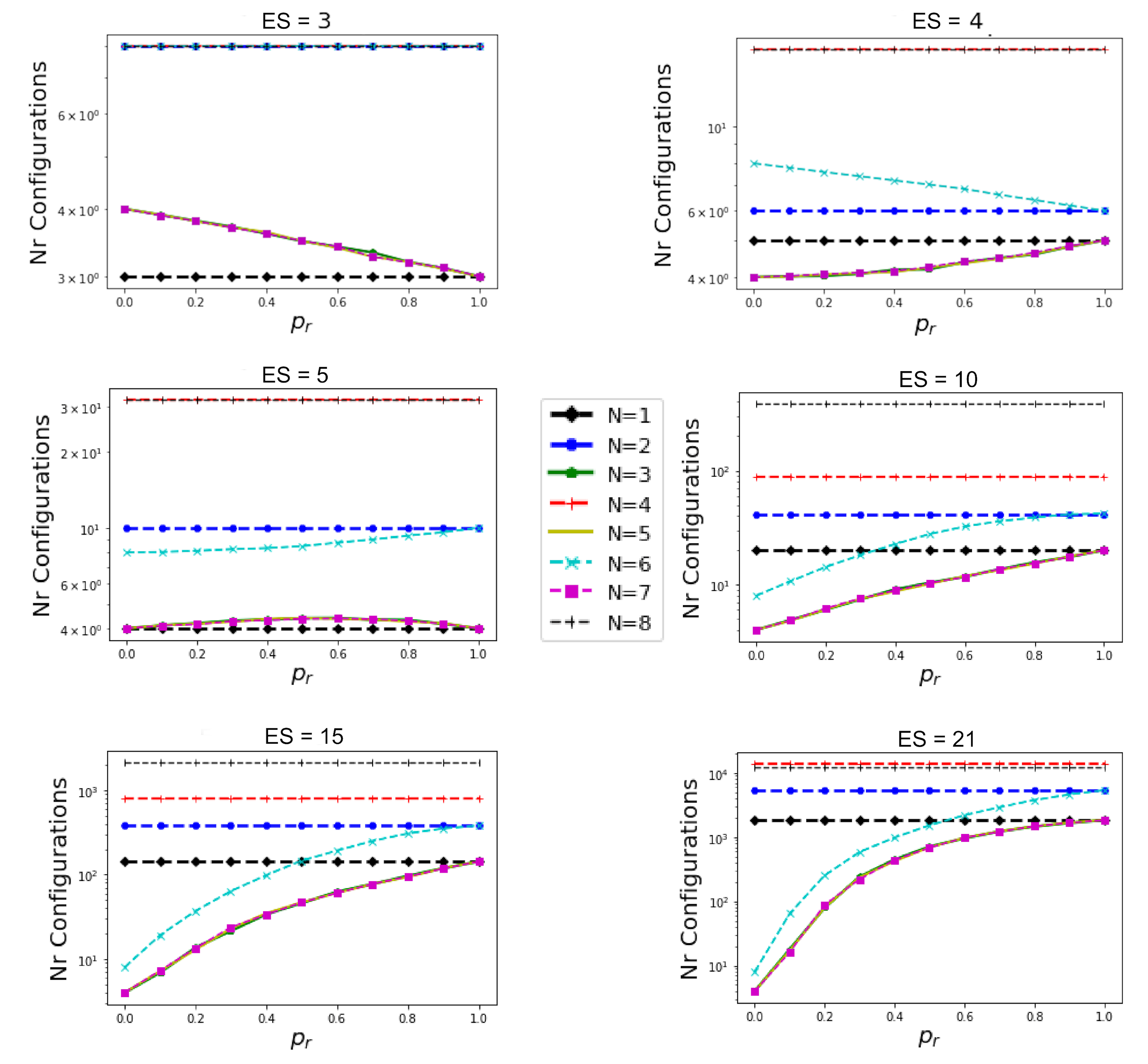}
    \caption{Number of configurations for different values of $p_r$.
    Each plot refers to a specific evolution step $ES$ and each line refers to a different value of $N$.\label{fig:figure_9}}
\end{figure}

\noindent In Figure~\ref{fig:figure_10}, we consider the average spin chain length $\langle{}N(\mathrm{ES})\rangle$ for each value of $N$.
We vary $p_r$ for the first four plots where $N\in\{3,5,6,7\}$.
The last plot considers all values of $N$ from $1$ to $8$ while keeping $p_r=0$.

\begin{figure}[H]
    \centering
    \includegraphics[width=5.5in]{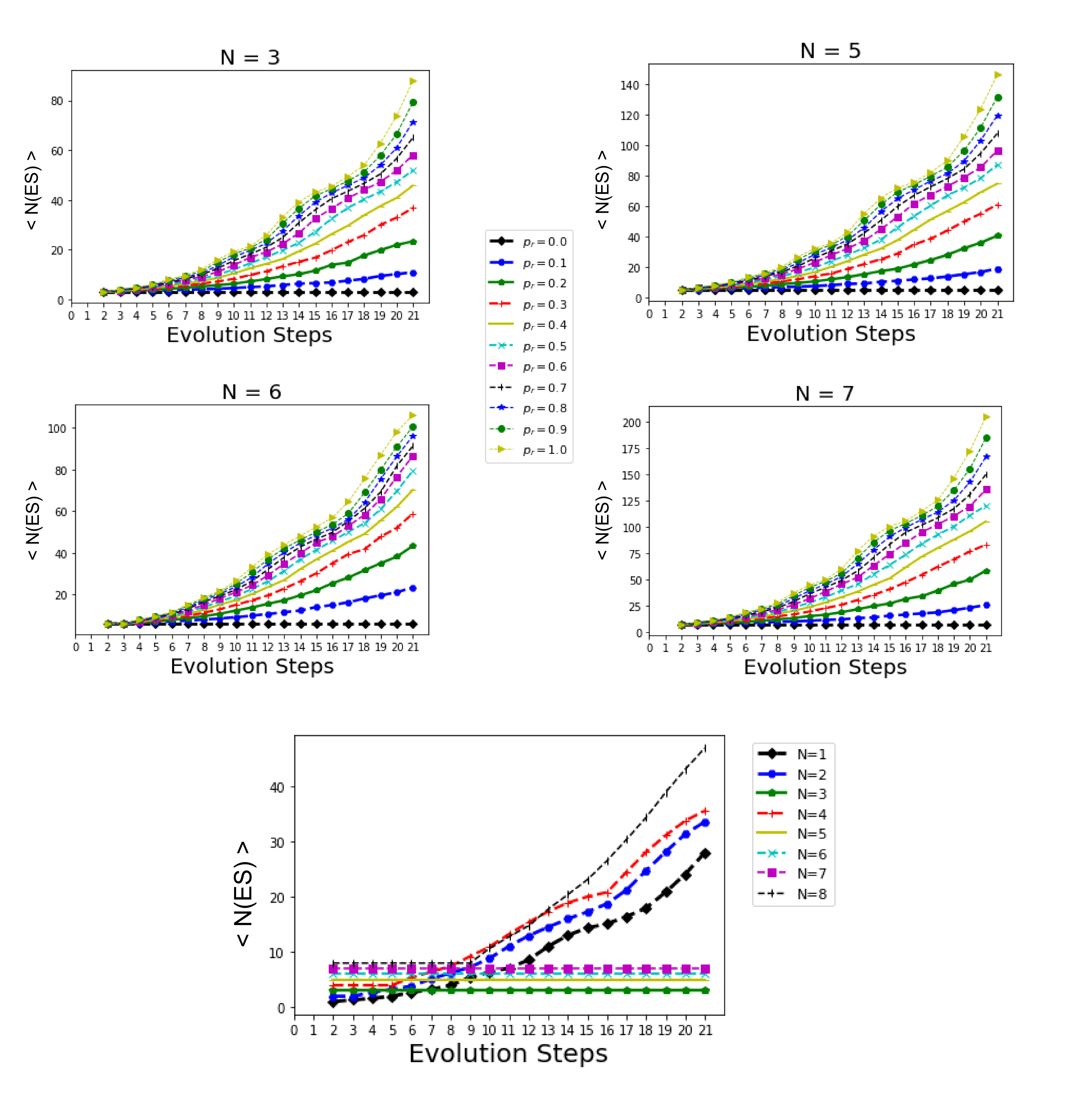}
    \caption{Average length $\langle{}N(ES)\rangle$ of spin chains during line-graph evolution.
    The first four plots refer to $N\in\{3,5,6,7\}$ as $p_r$ varies, and the final plot refers to $N\in\{1,\ldots,8\}$ with $p_r=0$ fixed.\label{fig:figure_10}}
\end{figure}

\noindent Results from $N=1$ to $N=8$ at different evolution steps are shown in Figure~\ref{fig:figure_11}.

\begin{figure}[H]
    \centering
    \includegraphics[width=5.5in]{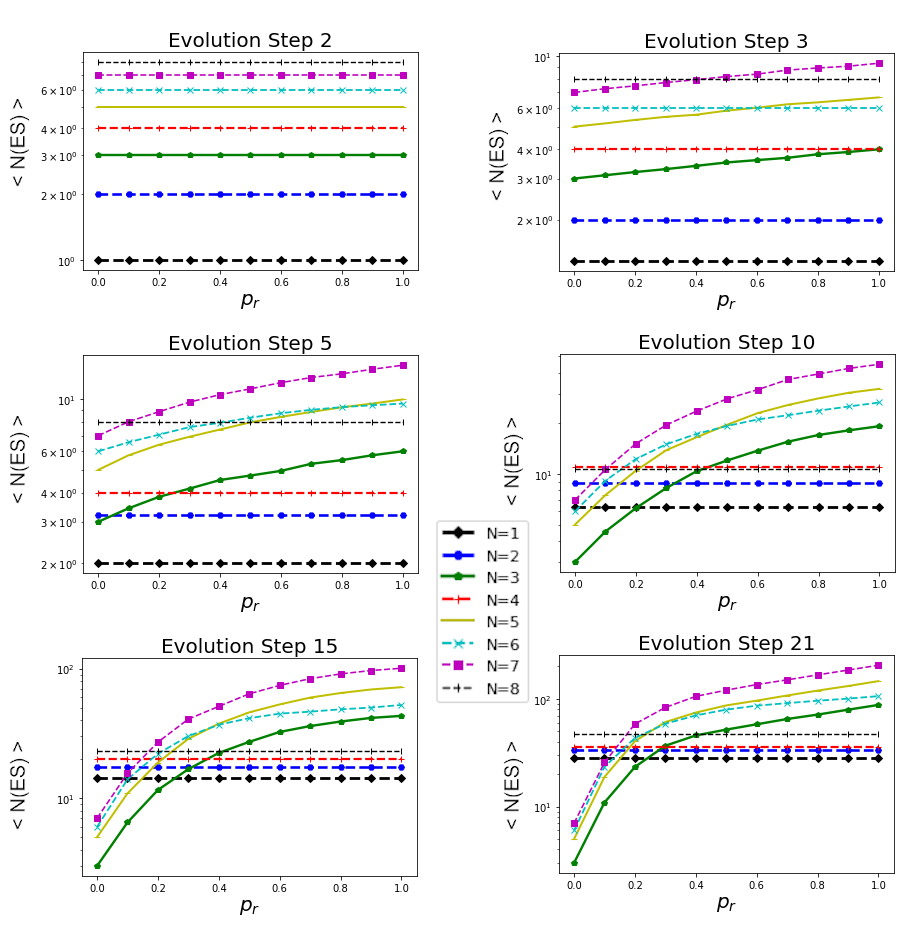}
    \caption{Average length $\langle{}N(ES)\rangle$ of spin chains at different steps during line-graph evolution while $p_r$ varies.
    As indicated in the legend, each line refers to the length $N$ of each initial chain.\label{fig:figure_11}}
\end{figure}

\noindent Figure~\ref{fig:figure_11} shows an interesting phenomenon: when $N$ is not a power of $2$, the average length of spin chains in each population is greater than for populations where $N$ is a power of $2$.
This phenomenon is related to a sort of probability threshold $\hat{p_r}$ which can be observed in the various plots of Figure~\ref{fig:figure_11}.
For instance, when $ES=5$, the thresholds for $N=6$ and $N=7$ are $\hat{p_r}=0.4$ and $\hat{p_r}=0.1$, respectively, and when $ES=21$, the threshold for $N=5$ and $N=6$ are both $\hat{p_r}=0.2$.
Lastly, in Figure~\ref{fig:figure_12}, we consider the number of frustrated configurations emerging in each population.
\begin{figure}[H]
    \centering
    \includegraphics[width=5.5in]{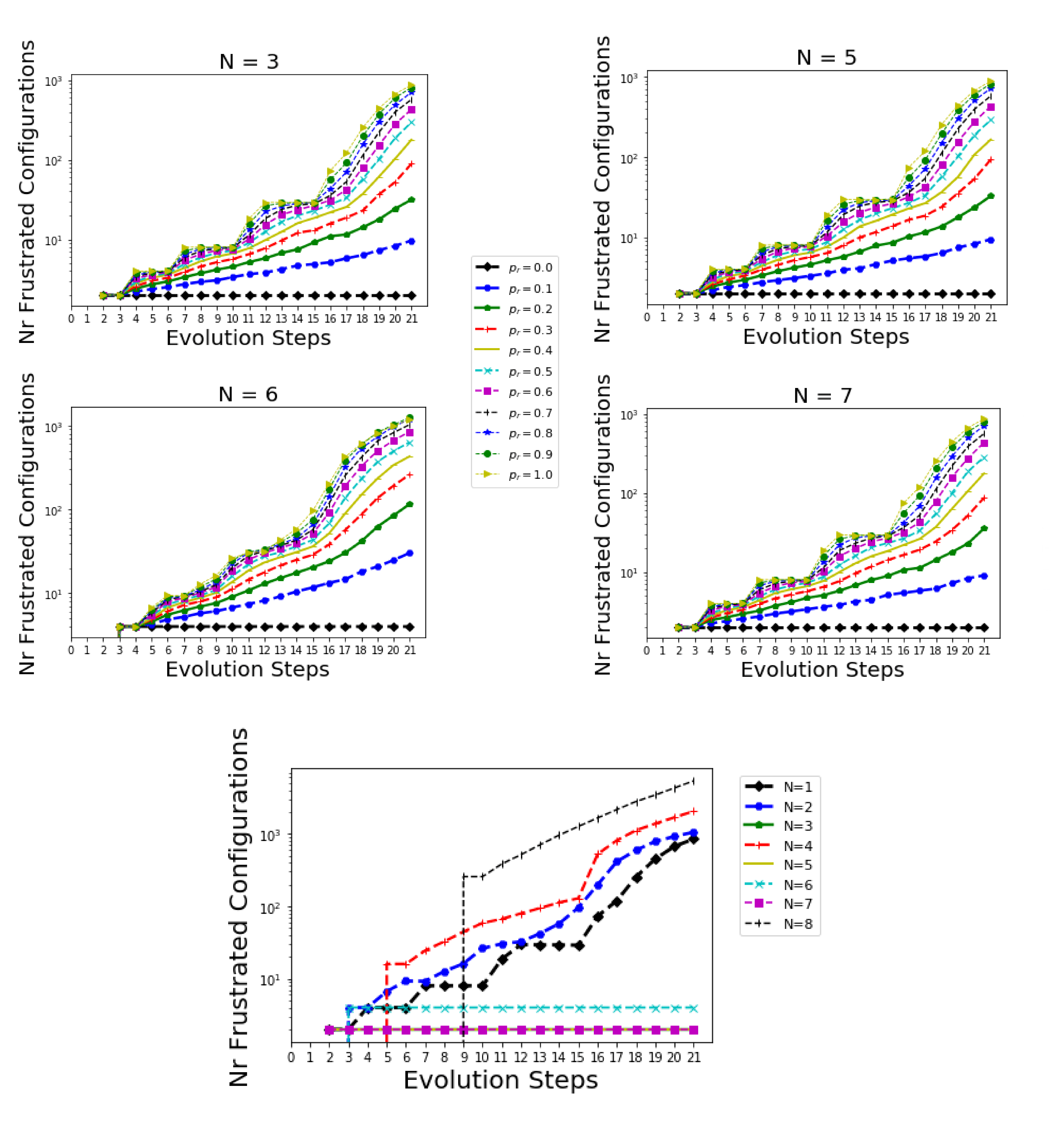}
    \caption{Number of frustrated configurations.
    The first four plots refer to $N\in\{3,5,6,7\}$ as $p_r$ varies, and the final plot refers to $N\in\{1,\ldots,8\}$ with $p_r=0$ fixed.\label{fig:figure_12}}
\end{figure}
\noindent The effect of recombination emerges as a sequence of steps in each curve in some of the above figures, for example Figure~\ref{fig:figure_7} and Figure~\ref{fig:figure_12}.

\section{Conclusion}\label{sec:conclusions}
\noindent This work investigates the action of the line-graph operator $\lambda$ applied to one-dimensional spin models and also studies some of the properties of this operator.
We encode line-graph dynamics into characteristic graphs $\Gamma_n$ based on different symmetry groups, for example $C_n$.
In addition, we propose a model for investigating the dynamics of populations of spin chains which can be thought of as a system of information growth and evolution.
The properties of $\lambda$ are studied analytically, while the evolutionary model is analysed with numerical simulations.
Interestingly, some of the results from our simulations reflect the previously studied properties of $\lambda$.
This is an example of a common general relationship between the topology of a system and dynamical processes within it.
For example, this relationship is well known in complex network theory~\cite{boccaletti01,estrada01,latora01}.
The analysis of $\kappa$ showed that, at some point, frustrated configurations will be generated.
In order to deal with this, the proposed model includes a mechanism for removing frustrations which sometimes allows broken spin chains to recombine into new stable structures.
From this, we are able to observe a system whose information, encoded into spin chains, continuously grow and evolve.
This dynamics resembles that of an automata~\cite{droz01}, as suggested by Figure~\ref{fig:figure_5}, whose properties could be further explored.
We also highlight that $\lambda$ induces a sort of phase transition from disordered configurations to ordered configurations only when the length $N$ of each spin chain is a power of $2$.
This model may also be of interest to those working at the boundary between information theory and physics~\cite{wolpert01,flack01,landauer01,zurek01} as well as for those working in theoretical biology~\cite{douglas01,walker01,walker02,walker03,sole01,niegel01}.
In particular, this model contains spin chains whose evolution alternates between two processes which resemble asexual and sexual reproduction.
The line-graph operator can be seen as an asexual process, whereas the recombination mechanism resembles a sexual process.
Remarkably, some evidence found in nature~\cite{charlesworth01} seems to confirm previous ideas about alternating sexual-asexual reproduction~\cite{green01,neiman01}
The probability of recombination $p_r$ can be seen as a sort of Darwinian fitness which measures the chance for each initial spin chain to have a rich population of descendants.
In this context, we could identify a genealogy tree for every spin configuration under the line-graph operator.
We envision a potential application where patterns of information can be analysed with the aim of computing their `most recent common ancestor'.
For example, this could be used to compare signals or images.\vspace{0.5cm}

\noindent This investigation considers line-graph dynamics for spin chains with the following conditions: they are one-dimensional, they have periodic boundary conditions, and there are two possible spins: $\uparrow$ and $\downarrow$.
We deem that these aspects deserve further attention and, therefore, we plan to investigate line-graph dynamics for spin chains in higher dimensions, with different topologies (see also~\cite{chiu02}), and with more than two spins.
To conclude, we hope that our results and observations stimulate novel ideas in a variety of areas, such as combinatorics, theoretical biology, complex systems and other cross-disciplinary domains.

\section*{Acknowledgments}
\noindent The authors are grateful to Francesco Caravelli for his useful comments and observations.

\end{document}